\newcommand\x{1.2}
\newcommand\y{2.5}
\newcommand\z{3.8}
\theoremstyle{definition}
\newtheorem{defn}{Definition}
\newtheorem{theorem}{Theorem}
\newtheorem{prop}[theorem]{Proposition}
\newtheorem{lemma}[theorem]{Lemma}
\newtheorem{corollary}[theorem]{Corollary}
\definecolor{lime}{HTML}{A6CE39}
\DeclareRobustCommand{\orcidicon}{%
	\begin{tikzpicture}
	\draw[lime, fill=lime] (0,0) 
	circle [radius=0.16] 
	node[white] {{\fontfamily{qag}\selectfont \tiny ID}};
	\draw[white, fill=white] (-0.0625,0.095) 
	circle [radius=0.007];
	\end{tikzpicture}
	\hspace{-2mm}
}
\xdef\csname orcid\x\endcsname{\noexpand\href{https://orcid.org/\csname orcidauthor\x\endcsname}{\noexpand\orcidicon}}
\title{The effect of the Katz parameter on node ranking, with a medical application}
\author[1,*]{Hunter~Rehm\orcidA{}}
\author[2]{Mona~Matar\orcidB{}}
\author[1]{Puck~Rombach\orcidC{}}
\author[2]{Lauren~McIntyre}
\affil[1]{Department of Mathematics and Statistics, University of Vermont, Burlington, VT, USA}
\affil[2]{NASA Glenn Research Center, Cleveland, OH, USA}
\affil[*]{Corresponding author: Hunter.Rehm@uvm.edu}
\begin{document}

\maketitle

\begin{abstract}
Katz centrality is a popular network centrality measure. It takes a (weighted) count of all walks starting at each node, with an additional damping factor of $\alpha$ that tunes the influence of walks as lengths increase. We introduce a tool to compare different centrality measures in terms of their node rankings, which takes into account that a relative ranking of two nodes by a centrality measure is unreliable if their scores are within a margin of error of one another. We employ this tool to understand the effect of the $\alpha$-parameter on the lengths of walks that significantly affect the ranking of nodes. In particular, we find an upper bound on the lengths of the walks that determine the node ranking up to this margin of error. If an application imposes a realistic bound on possible walk lengths, this set of tools may be helpful to determine a suitable value for $\alpha$. We show the effect of $\alpha$ on rankings when applied to the Susceptibility Inference Network, which contains subject matter expert informed data that represents the probabilities of medical conditions progressing from one to another. This network is part of the Medical Extensible Dynamic Probabilistic Risk Assessment Tool, developed by NASA, an event-based risk modeling tool that assesses human health and medical risk during space exploration missions.\\

{\bf Keywords -- }network, centrality, Katz parameter, ranking.

{\bf MSC --} 68R10, 65F60.

\end{abstract}

\section{Introduction}
Networks are structures that naturally appear in every aspect of life and are studied in a wide range of disciplines from sociology, to biology, and engineering~\cite{freeman2004development,guze2014graph,newman2018networks,pavlopoulos2011using}. One common question asked in network theory is how to rank the nodes according to importance, where importance can have many meanings depending on the application~\cite{das2018study}. Many ranking algorithms are based on a, possibly weighted, count of walks that a node is contained in. Examples of such centrality measures are degree, betweenness~\cite{freeman1977set}, closeness~\cite{bavelas1950communication,beauchamp1965improved,freeman1978centrality}, eigenvector~\cite{bonacich1972technique,bonacich1987power}, PageRank~\cite{page1999pagerank}, and Katz centrality~\cite{katz1953new}. The latter is the focus of this paper. 

\textit{Katz centrality}, developed by Leo Katz in 1953~\cite{katz1953new}, has been used in numerous applications~\cite{fletcher2018structure,zhan2017identification}. The Katz centrality of a node is a weighted count of all walks of any length starting at the node. Each walk of length $k$ is weighted by $\alpha^k$, where $\alpha$ is called the \textit{Katz parameter}. We give a formal definition of Katz centrality in Section~\ref{sec:Definitions}. 
Since the Katz parameter has a decaying effect, we can approximate the Katz centrality by ignoring the contribution of walks past a given length $L$. In~\cite{nathan2017approximating} and \cite{nathan2017graph}, the authors numerically explore this type of approximation. In Section~\ref{sec:ParameterLengthRelationship}, we give a lower bound on this value $L$ (in terms of $\alpha$) that guarantees a desired level of accuracy in terms of the Katz centrality and its node ranking. 

As an application, we look at a model developed by the National Aeronautics and Space Administration (NASA) called the Susceptibility Inference Network (SIN). This network models a set of medical conditions that may progress from one to another with some probability. The application of Katz centrality in this setting estimates the effect of a medical condition and its consequent progressions on astronaut productivity. NASA wishes to use the progression risk to identify a condition (or group of conditions) that highly influences the risk of the crew members.

This paper is organized as follows. Section~\ref{sec:Definitions} reviews useful graph theory concepts, definitions and basic results. Section~\ref{sec:SIN_intro} introduces a network of medical conditions created by subject matter experts, which we will later use as a case study. In Section~\ref{sec:ParameterLengthRelationship}, we bound the error generated from approximating the Katz centrality by restricting the number of steps allowed in a walk, and develop a relationship between that number and the Katz parameter $\alpha$. An example of the relationship between $\alpha$ and the $\alpha$-Katz centrality node ranking is in Section \ref{sec:ParameterLengthRelationship} and our medical application in Section~\ref{sec:ApplicationToSIN}. Additionally, we assess the upper bound given in Section \ref{sec:ParameterLengthRelationship} to the true length in Section \ref{sec:comparisons}. Finally, the results and future work are addressed in Section~\ref{sec:Discussion}.

\subsection{Definitions}\label{sec:Definitions}
In this section, we provide basic definitions for the graph theoretical structures and tools used for the results in Section~\ref{sec:ParameterLengthRelationship}.

\begin{defn}\label{def:GAW} {\bf (Weighted, directed network)}
Let $N = (V,E,w)$ be an \textit{edge-weighted, directed network} consisting of $V$, the set of \textit{n} \textit{nodes}, $E\subseteq V\times V$, the set of \textit{edges}, and a weight function $w:E \to \mathbb{R}^+$. 

We represent such a network by an \textit{adjacency matrix} $A=A(N)$, where the entry $A_{ij}$ is the weight of the edge from node $i$ to node $j$, or $A_{ij} = 0$ if there is no edge from $i$ to $j$ in $N$. Let $W$ be an $n$-dimensional vector of non-negative node weights. In a setting where edges and/or nodes are unweighted, weights in $A$ and $W$ are all set to 1. 
\end{defn}

For example, in our application in Section \ref{sec:ApplicationToSIN}, our weighted, directed network has nodes that represent medical conditions and the node weights represent their severity, while edge weights represent the probability of one medical condition progressing to another.

The \textit{spectral radius} $\rho$ of $N$ (or $A$) is the maximum modulus of the eigenvalues of $A$. A \textit{walk of length $k$} from node $u$ to $v$ is a sequence of $k$ edges $(v_i,v_{i+1})\in E$, $i\in [1,k]$ such that $v_1 = u$ and $v_{k+1} = w$.
The \textit{distance} from $u$ to $v$ is the length of a shortest walk from $u$ to $v$. The \textit{$k$-hop neighborhood} of a node $v\in V$ is the set of nodes at a distance less than or equal to $k$ from $v$. A \textit{centrality measure} is a function that assigns a real number to each node, in order to evaluate its relative importance to other nodes. Each centrality measure gives a (partial) \textit{ranking} of the nodes, which reflects their relative importance. 

\begin{defn}\label{def:Katz} {\bf (Katz centrality)}
Let $N$ be an edge-weighted, directed network with node weights $W$. Let $A=A(N)$ with spectral radius $\rho$, and let $\alpha\in \left(0,1/\rho\right)$. The \textit{$\alpha$-Katz centrality} vector~\cite{de2019analysis,estrada2010network,katz1953new}, is defined as 
$$C(\alpha) = \left(\sum_{i = 1}^\infty \alpha^kA^k\right)\cdot W = \left(\left(I-\alpha A\right)^{-1}-I\right)\cdot W.$$
The \textit{$\alpha$-Katz score} of a particular node $i$ can then be expressed as 
$$C(\alpha)_i = \sum_{k = 1}^\infty \sum_{j = 1}^n W_j\alpha^k \left(A^k\right)_{ij}.$$
The \textit{$(\alpha, \ell)$-Katz centrality} vector~\cite{acar2009link,beres2018temporal,lu2010supervised} is
$$C(\alpha,\ell) = \left(\sum_{i = 1}^\ell \alpha^kA^k\right)\cdot W$$
and for a particular node $i$, the \textit{$(\alpha, \ell)$-Katz score} can be written as
$$C(\alpha,\ell)_i = \sum_{k = 1}^\ell \sum_{j = 1}^n W_j\alpha^k \left(A^k\right)_{ij}.$$
\end{defn}

Both $C(\alpha)$ and $C(\alpha, \ell)$ measure the \textit{downstream} influence of nodes, since they are weighted sums over outgoing walks. Replacing the matrix $A$ by its transpose $A^T$ reverses edge directions, which results in taking weighted sums over incoming walks instead, and measuring the \textit{upstream} influence~\cite{de2019analysis,newman2018networks}.

Definition~\ref{def:agree} provides a tool to compare centrality measures purely in terms of their relative node rankings. Intuitively, we may set a threshold $\epsilon$ for a centrality measure $C$, such that $\lvert C_i - C_j\rvert\geq \epsilon$ implies that $C$ provides a relative ranking of nodes $i$ and $j$. If $\lvert C_i - C_j\rvert < \epsilon$, we cannot reliably recover a ranking from $C$. For two centrality measures $C$ and $C'$, we compare their rankings and conclude that they agree on a ranking if they agree for every node pair where both of their rankings are reliable. 

\begin{defn}\label{def:agree} {\bf ($\epsilon$-agreement)}
Let $N$ be a weighted, directed network, $\epsilon,\epsilon'>0$, and $C$ and $C'$ be centrality measures. The nodes $i,j\in V(N)$ are \textit{$(\epsilon,\epsilon')$-properly ranked with respect to $C$ and $C'$} if the following holds:
\begin{enumerate}
    \item $\lvert C_i - C_j\rvert<\epsilon \text{ or } \lvert C'_i - C'_j\rvert<\epsilon',$ 
    \item otherwise, $C_i-C_j$ and $C'_i-C'_j$ have the same sign.
\end{enumerate}
We say that $C$ and $C'$ \textit{$(\epsilon,\epsilon')$-agree on $N$} if every pair of nodes in $N$ is $(\epsilon,\epsilon')$-properly ranked with respect to $C$ and $C'$. If $\epsilon=\epsilon'$, we simply say $\epsilon$-proper ranking and $\epsilon$-agreement.
\end{defn}

\begin{prop}\label{prop:epsilonagree}
Let $C$ and $C'$ be two centrality measures on a network $N$. If \[\|C - C'\|_\infty <\epsilon,\] then $C$ and $C'$ $\epsilon$-agree.
\end{prop}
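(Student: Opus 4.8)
The plan is to unwind the definition of $\epsilon$-agreement directly: since $C$ and $C'$ $\epsilon$-agree precisely when every pair of nodes is $(\epsilon,\epsilon)$-properly ranked, I would fix an arbitrary pair $i,j \in V(N)$ and verify the proper-ranking condition for it. The first clause of Definition~\ref{def:agree} offers a free pass: if $\lvert C_i - C_j\rvert < \epsilon$ or $\lvert C'_i - C'_j\rvert < \epsilon$, the pair is properly ranked and nothing further is needed. Thus the only situation demanding argument is when \emph{both} gaps are large, i.e. $\lvert C_i - C_j\rvert \ge \epsilon$ and $\lvert C'_i - C'_j\rvert \ge \epsilon$, and in that situation I must show $C_i - C_j$ and $C'_i - C'_j$ share a sign.

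To treat that case I would assume without loss of generality $C_i - C_j \ge \epsilon$ (the reverse sign being symmetric) and try to exclude $C'_i - C'_j \le -\epsilon$. The hypothesis $\|C-C'\|_\infty < \epsilon$ supplies the two pointwise bounds $\lvert C_i - C'_i\rvert < \epsilon$ and $\lvert C_j - C'_j\rvert < \epsilon$. Decomposing $C'_i - C'_j = (C_i - C_j) + (C'_i - C_i) - (C'_j - C_j)$ and inserting these bounds gives $C'_i - C'_j > \epsilon - \epsilon - \epsilon = -\epsilon$.

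The step I expect to be the real obstacle, and the one that is easy to miss, is that this estimate $C'_i - C'_j > -\epsilon$ is by itself too weak: a sup-norm perturbation smaller than $\epsilon$ does not on its own force the perturbed gap to stay positive. The resolution is to remember that we are already operating under the assumption $\lvert C'_i - C'_j\rvert \ge \epsilon$, which was the whole reason this case arose. That dichotomy means $C'_i - C'_j \ge \epsilon$ or $C'_i - C'_j \le -\epsilon$; the strict inequality just derived kills the second alternative, leaving $C'_i - C'_j \ge \epsilon > 0$, exactly the sign of $C_i - C_j$. Since $i,j$ were arbitrary, every pair is $(\epsilon,\epsilon)$-properly ranked, which is the desired $\epsilon$-agreement.
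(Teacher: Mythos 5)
Your proof is correct and is essentially the paper's argument in contrapositive form: the paper supposes some pair is improperly ranked and telescopes $C'_u - C_u = (C'_u - C'_v) + (C'_v - C_v) + (C_v - C_u) > \epsilon$ to contradict the hypothesis $\|C - C'\|_\infty < \epsilon$, while you rearrange the same four quantities to get $C'_i - C'_j > -\epsilon$ and then use the case assumption $\lvert C'_i - C'_j\rvert \ge \epsilon$ to rule out the negative branch. The underlying triangle-inequality manipulation is identical; if anything, your write-up is slightly more careful than the paper's, which negates clause (1) of Definition~\ref{def:agree} to strict inequalities $> \epsilon$ where $\ge \epsilon$ is what actually follows (a harmless slip, since the argument goes through either way).
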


\begin{proof}
Suppose for the sake of contradiction that $C$ and $C'$ do not $\epsilon$-agree. Then, there exists a pair of nodes $u,v\in V(N)$ that is not $\epsilon$-properly ranked. By part (1) of Definition~\ref{def:agree}, we have $\lvert C_u - C_v\lvert>\epsilon$ and $\lvert C_u' - C_v'\lvert>\epsilon$. By part (2), without loss of generality, we have $C_u-C_v<0$ and $C_u'-C_v'>0$.
 
We have
\begin{align*}
    C_u'-C_u&=    \underbrace{C_u'-C_v'}_{>\epsilon}+\underbrace{C_v'-C_v}_{>-\epsilon}+\underbrace{C_v-C_u}_{>\epsilon}> \epsilon,
\end{align*}
which contradicts that $\|C-C'\|_\infty<\epsilon$.
\end{proof}

\begin{prop}\label{prop:epsilonagree2}
Let $C$ and $C'$ be two centrality measures on a network $N$ such that for all $v\in V(N)$, $0\leq C_v - C'_v\leq 2\epsilon $, then $C$ and $C'$ $\epsilon$-agree.
\end{prop}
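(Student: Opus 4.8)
The plan is to argue by contradiction, in direct parallel with the proof of Proposition~\ref{prop:epsilonagree}. Suppose $C$ and $C'$ do not $\epsilon$-agree. Then Definition~\ref{def:agree} supplies a pair of nodes $u,v$ that is not $\epsilon$-properly ranked, so that $\lvert C_u-C_v\rvert\geq\epsilon$ and $\lvert C'_u-C'_v\rvert\geq\epsilon$ while $C_u-C_v$ and $C'_u-C'_v$ carry opposite signs. Relabeling $u$ and $v$ if necessary, I would arrange that $u$ is the member of the pair that $C$ ranks above $v$ and that $C'$ ranks below $v$; that is, $C_u-C_v\geq\epsilon$ and $C'_u-C'_v\leq-\epsilon$. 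This single normalization handles the two symmetric sign cases at once.

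The core of the argument is then a telescoping estimate for the gap $C_u-C'_u$ at the single node $u$, where the hypothesis is brought to bear. I would write
\[
C_u-C'_u=\underbrace{(C_u-C_v)}_{\geq\epsilon}+\underbrace{(C_v-C'_v)}_{\geq 0}+\underbrace{(C'_v-C'_u)}_{\geq\epsilon},
\]
using $C_u-C_v\geq\epsilon$ from the orientation, $C_v-C'_v\geq 0$ from the lower bound in the hypothesis, and $C'_v-C'_u=-(C'_u-C'_v)\geq\epsilon$ from the orientation again. Summing the three pieces gives $C_u-C'_u\geq 2\epsilon$, which collides with the upper bound $C_u-C'_u\leq 2\epsilon$ assumed for every node, and this contradiction establishes $\epsilon$-agreement.

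The main obstacle I anticipate is the tightness of these inequalities: the chain produces $C_u-C'_u\geq 2\epsilon$, which only strictly contradicts the hypothesis once at least one of the contributing bounds is strict. Following the convention used in the proof of Proposition~\ref{prop:epsilonagree}, where the negation of the strict condition~(1) of Definition~\ref{def:agree} is read as $\lvert C_u-C_v\rvert>\epsilon$, the first and third terms become strict, yielding $C_u-C'_u>2\epsilon$ and a genuine contradiction. I would state this strictness explicitly, since the saturating configuration in which $C_u-C'_u=2\epsilon$ and $C_v=C'_v$ is exactly where the claim is most delicate.

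A cleaner alternative I would consider is to reduce directly to Proposition~\ref{prop:epsilonagree} by a uniform shift. Setting $\hat C:=C-\epsilon\mathbf{1}$ changes neither the pairwise differences $C_i-C_j$ nor their signs, so $\hat C$ and $C$ induce identical $\epsilon$-proper-ranking relations, and it suffices to compare $\hat C$ with $C'$. The hypothesis then gives $\hat C_v-C'_v=(C_v-C'_v)-\epsilon\in[-\epsilon,\epsilon]$, so $\lVert\hat C-C'\rVert_\infty\leq\epsilon$, and Proposition~\ref{prop:epsilonagree} finishes the argument, subject to the same boundary strictness caveat noted above.
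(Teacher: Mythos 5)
Your main argument is the paper's own proof up to a mirror symmetry: the paper normalizes the bad pair as $C_u-C_v<-\epsilon$ and $C'_u-C'_v>\epsilon$ and telescopes $C_u-C'_u=(C_u-C_v)+(C_v-C'_v)+(C'_v-C'_u)<-\epsilon+2\epsilon-\epsilon=0$, contradicting the lower bound $C_u-C'_u\geq 0$, whereas you take the opposite orientation and contradict the upper bound $C_u-C'_u\leq 2\epsilon$; the decomposition through node $v$ and the contradiction mechanism are identical, so this is essentially the same proof. Two of your side remarks are worth keeping, however. First, your strictness caveat is not pedantry: the paper's proof commits exactly the imprecision you describe, reading the failure of condition (1) of Definition~\ref{def:agree} as $\lvert C_u-C_v\rvert>\epsilon$ when the literal negation is $\lvert C_u-C_v\rvert\geq\epsilon$, and at the knife edge the statement as written genuinely fails --- take two nodes with $C_u=\epsilon$, $C_v=0$, $C'_u=-\epsilon$, $C'_v=0$, which satisfies $0\leq C_w-C'_w\leq 2\epsilon$ for both nodes yet leaves the pair improperly ranked. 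So flagging that the proof needs the strict-negation convention (or a strict hypothesis) is an improvement on the paper. Second, your reduction to Proposition~\ref{prop:epsilonagree} via the shift $\hat C=C-\epsilon\mathbf{1}$ is a genuinely different route that the paper does not take: it buys economy by reusing the earlier proposition and makes transparent why the asymmetric hypothesis $[0,2\epsilon]$ is just a recentered version of the symmetric one, though, as you note, it inherits the same boundary issue because Proposition~\ref{prop:epsilonagree} demands $\lVert\hat C-C'\rVert_\infty<\epsilon$ strictly while the shift only yields $\leq\epsilon$.
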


\begin{proof}
Suppose for the sake of contradiction that $C$ and $C'$ do not $\epsilon$-agree. Then, there exists a pair of nodes $u,v\in V(N)$ that is not $\epsilon$-properly ranked. By part (1) of Definition~\ref{def:agree}, we have $\lvert C_u - C_v\lvert>\epsilon$ and $\lvert C_u' - C_v'\lvert>\epsilon$. By part (2), without loss of generality, we have $C_u-C_v<0$ and $C_u'-C_v'>0$.
 
 We have
 \begin{align*}
     C_u-C_u'&=    \underbrace{C_u-C_v}_{<-\epsilon}+\underbrace{C_v-C_v'}_{<2\epsilon}+\underbrace{C_v'-C_u'}_{<-\epsilon}< 0,
 \end{align*}
 which contradicts that $C_u - C_u'\geq 0$.
\end{proof}

In Section~\ref{sec:ParameterLengthRelationship}, we use this notion to compare two closely related centrality measures, $C(\alpha)$ and $C(\alpha, \ell)$, and we therefore only use $\epsilon$-proper ranking and $\epsilon$-agreement. However, we state the definition here in a more general form, as it can be used to compare any pair of centrality measures, even if their distributions of values differ significantly. The vector $C(\alpha,\ell)$ converges to $C(\alpha)$ as $\ell \to \infty$. In Theorem~\ref{thm:epsilonagree}, we show that this implies that for all $\epsilon>0$, there exists an $L$ so that for any $\ell>L$, $C(\alpha)$ and $C(\alpha,\ell)$ $\epsilon$-agree.

\subsection{Susceptibility Inference Network}\label{sec:SIN_intro}
The Medical Extensible Dynamic Probabilistic Risk Assessment Tool (MEDPRAT) developed by the National Aeronautics and Space Administration (NASA) is an event-based risk modeling tool that assesses human health and medical risk during space exploration missions~\cite{mcintyre2020dynamic,mcintyre2022modelbasedrisk}. One of its key tasks is to capture relationships between medical events. These relationships are described by the Susceptibility Inference Network (SIN).

The SIN is a directed network where nodes represent medical conditions. The data in this network is subject matter expert informed and is currently a prototype. There is an edge from $u$ to $v$ if medical condition $u$ can progress into medical condition $v$. This directed edge $(u,v)$ is weighted by the probability that such a progression occurs. Note that a medical conditions may progress to multiple other conditions simultaneously, or to no other conditions. Therefore, this matrix is not a transition matrix. The SIN currently has 99 nodes and 1078 edges. Medical conditions included are, for example, acute radiation syndrome, which has many outgoing edges towards other medical conditions. On the other hand, anxiety has many incoming edges.

Each node in the SIN has an associated weight which evaluates the severity of having the condition regardless of the progression from or to that condition. This severity is quantified by Quality Time Lost (QTL), which we call \textit{primary QTL}. The primary QTL measures the productive time that a crew member is expected to lose, measured in days. 

For each medical condition, we wish to estimate the total expected QTL resulting from the condition itself as well as its possible progressions. We let $W$ represent the nodes' expected primary QTL, and the total expected QTL for a medical condition $i$ is given by
\[ \sum_{k = 0}^\infty \sum_{j = 1}^n W_j \left(A^k\right)_{ij} = W_i + C(1)_i. \]
\[ C(1)_i = \sum_{k = 1}^\infty \sum_{j = 1}^n W_j \left(A^k\right)_{ij} .\]
The value $ C(1)_i$ can be viewed as the expected \textit{subsequent QTL} of a medical condition, caused by progressions and assuming the absence of interventions and time limits of the mission. Then, to make this estimate more realistic, the parameter $\alpha$ provides a damping factor which decreases the weight of walks as they get longer. This application therefore illustrates the importance of using realistic walk lengths to guide the choice of $\alpha$. We provide a theoretic foundation for this in Section~\ref{sec:ParameterLengthRelationship}. In Section~\ref{sec:ApplicationToSIN}, we discuss how different values of $\alpha$ produce different rankings for the SIN due to subsequent QTL.

\section{The Katz parameter and walk length} \label{sec:ParameterLengthRelationship}

In this section, we describe in more detail the relationship between maximum walk lengths $\ell$ and the parameter $\alpha$. In Theorem~\ref{thm:epsilonagree}, we find a lower bound on $\ell$ that guarantees that $C(\alpha)$ and $C(\alpha, \ell)$ $\epsilon$-agree. This sheds light on the length of walks that decide the ranking provided by $C(\alpha)$. We provide a small, illustrative example of the effect of $\alpha$ on node rankings.

Lemma~\ref{lem:errorbound} gives an upper bound on the difference between values in $C(\alpha)$ and $C(\alpha,\ell)$. 

\begin{lemma}(Absolute Error Tolerance)\label{lem:errorbound}
Let $p \in \{ 1,2,\infty \}$ and $\alpha \in (0,1/\rho)$. Then 
$$\left\|C(\alpha)-C(\alpha,\ell)\right\|_\infty\le\left(\alpha\|A\|_p\right)^\ell\left\|C(\alpha)\right\|_p\\:=\epsilon_\ell.$$
\end{lemma}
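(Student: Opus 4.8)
The plan is to recognize $C(\alpha)-C(\alpha,\ell)$ as the tail of the defining Katz series and to factor out a common power of $\alpha A$. First I would write the difference as the omitted terms,
\[
C(\alpha)-C(\alpha,\ell)=\left(\sum_{k=\ell+1}^{\infty}\alpha^{k}A^{k}\right)W,
\]
and then reindex with $k=\ell+j$ to extract the clean identity
\[
C(\alpha)-C(\alpha,\ell)=\alpha^{\ell}A^{\ell}\left(\sum_{j=1}^{\infty}\alpha^{j}A^{j}\right)W=\alpha^{\ell}A^{\ell}\,C(\alpha).
\]
This is the crux of the argument: the truncation error is \emph{exactly} the operator $\alpha^{\ell}A^{\ell}$ applied to the full Katz vector, which reduces everything that follows to routine norm estimates.

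With this identity in hand, I would finish by chaining standard inequalities. Taking the $\infty$-norm and using the vector comparison $\|x\|_{\infty}\le\|x\|_{p}$, valid for every $p\in\{1,2,\infty\}$ (with equality when $p=\infty$), gives
\[
\|C(\alpha)-C(\alpha,\ell)\|_{\infty}=\alpha^{\ell}\|A^{\ell}C(\alpha)\|_{\infty}\le\alpha^{\ell}\|A^{\ell}C(\alpha)\|_{p}.
\]
Submultiplicativity of the induced $p$-norm then yields $\|A^{\ell}C(\alpha)\|_{p}\le\|A^{\ell}\|_{p}\|C(\alpha)\|_{p}\le\|A\|_{p}^{\ell}\|C(\alpha)\|_{p}$, and collecting the scalar factors produces precisely $(\alpha\|A\|_{p})^{\ell}\|C(\alpha)\|_{p}=\epsilon_{\ell}$, as claimed.

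The only step needing genuine care — and the main, if modest, obstacle — is justifying the manipulation of the infinite series. Because $\alpha\in(0,1/\rho)$, the spectral radius of $\alpha A$ equals $\alpha\rho<1$, so the Neumann series $\sum_{k\ge 0}(\alpha A)^{k}$ converges absolutely in operator norm. This simultaneously guarantees that $C(\alpha)$ is well defined and legitimizes pulling the fixed matrix $\alpha^{\ell}A^{\ell}$ out of the convergent tail sum in the reindexing step. Beyond verifying this convergence condition and recalling the vector-norm comparison together with submultiplicativity of induced matrix norms, I anticipate no further difficulty.
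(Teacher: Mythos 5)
Your proposal is correct and follows essentially the same route as the paper's proof: express the truncation error as the tail of the Katz series, factor it as $\alpha^{\ell}A^{\ell}$ applied to $C(\alpha)$, then chain the vector-norm comparison $\|\cdot\|_{\infty}\le\|\cdot\|_{p}$ with submultiplicativity of the induced $p$-norm. Your explicit justification of the series manipulation via convergence of the Neumann series (since $\alpha\rho<1$) is a detail the paper leaves implicit, but it does not change the argument.
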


\begin{proof}
First, note that $\|V\|_\infty \leq \|V\|_2 \leq \|V\|_1$ for all vectors $V$. Furthermore, the $p$-norm is submultiplicative. 
We have

\begin{align*}
    \|C(\alpha) - C(\alpha,\ell)\|_\infty &\leq \|C(\alpha) - C(\alpha, \ell)\|_p\\
    &= \left\|\left(\sum_{i = \ell+1}^\infty \alpha^k A^k\right)\cdot W\right\|_p\\
    &=  \left\|\alpha^{\ell}A^{\ell}\left(\sum_{i = 1}^\infty \alpha^k A^k\right)\cdot W\right\|_p\\
    &\leq \alpha^{\ell}\|A^{\ell}\|_p\left\|\left(\sum_{i = 1}^\infty \alpha^k A^k\right)\cdot W\right\|_p\\
    &\leq \left(\alpha\|A\|_p\right)^\ell\left\|C(\alpha)\right\|_p.
\end{align*}
\end{proof}

In Lemma~\ref{lem:error_guarantee}, we show that there exists an $L$ so that for all $\ell>L$, the difference between the scores in $C(\alpha)$ and $C(\alpha,\ell)$ is small. 

\begin{lemma}(Error Tolerance Guarantee)\label{lem:error_guarantee}
Let $p \in \{ 1,2,\infty \}$, $\alpha\in (0,1/\|A\|_p)$ and $\epsilon>0$. If
$$\ell > \log_{\alpha \|A\|_p}\left(\frac{2\epsilon}{\|C(\alpha)\|_p}\right):=L$$
then 
$\left\|C(\alpha)-C(\alpha,\ell)\right\|_\infty<2\epsilon$.
\end{lemma}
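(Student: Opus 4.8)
The plan is to obtain this as a direct consequence of Lemma~\ref{lem:errorbound}, which already gives the pointwise bound $\|C(\alpha)-C(\alpha,\ell)\|_\infty \le (\alpha\|A\|_p)^\ell \|C(\alpha)\|_p = \epsilon_\ell$. The goal is therefore to show that the stated lower bound on $\ell$ forces $\epsilon_\ell < 2\epsilon$, since then the chain $\|C(\alpha)-C(\alpha,\ell)\|_\infty \le \epsilon_\ell < 2\epsilon$ closes the argument immediately. So essentially all the work is in manipulating the scalar inequality $(\alpha\|A\|_p)^\ell \|C(\alpha)\|_p < 2\epsilon$ into the logarithmic form that defines $L$.

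First I would isolate the exponential factor, rewriting the target $\epsilon_\ell < 2\epsilon$ as
\[
(\alpha\|A\|_p)^\ell < \frac{2\epsilon}{\|C(\alpha)\|_p}.
\]
Then I would take $\log_{\alpha\|A\|_p}$ of both sides. The essential subtlety here is that the base $\alpha\|A\|_p$ lies in $(0,1)$: the hypothesis $\alpha \in (0,1/\|A\|_p)$ guarantees $0 < \alpha\|A\|_p < 1$. Because a logarithm with base in $(0,1)$ is \emph{decreasing}, applying it reverses the inequality, so the condition $(\alpha\|A\|_p)^\ell < 2\epsilon/\|C(\alpha)\|_p$ becomes
\[
\ell > \log_{\alpha\|A\|_p}\!\left(\frac{2\epsilon}{\|C(\alpha)\|_p}\right) = L,
\]
which is exactly the stated hypothesis. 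Reading this chain backwards shows that $\ell > L$ implies $\epsilon_\ell < 2\epsilon$, and combining with Lemma~\ref{lem:errorbound} gives the conclusion.

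The step I expect to require the most care is precisely this direction-of-inequality bookkeeping around the base-in-$(0,1)$ logarithm; it is the one place an off-by-a-sign error could creep in, and it is why the hypothesis is $\alpha < 1/\|A\|_p$ rather than $\alpha < 1/\rho$. Everything else is routine: the monotonicity of exponentials and the fact that $\|C(\alpha)\|_p > 0$ (so that division and taking logarithms are legitimate) should be noted but need no elaborate justification. I would present the proof as the two-line equivalence above followed by an appeal to Lemma~\ref{lem:errorbound}, rather than re-deriving the error bound from scratch.
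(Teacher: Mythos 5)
Your proposal is correct and follows essentially the same route as the paper: reduce via Lemma~\ref{lem:errorbound} to the scalar inequality $\epsilon_\ell < 2\epsilon$, then exploit that the base $\alpha\|A\|_p$ lies in $(0,1)$. The only cosmetic difference is that you take $\log_{\alpha\|A\|_p}$ of the target inequality (reversing it), while the paper substitutes $\ell > L$ into the decreasing exponential and simplifies $(\alpha\|A\|_p)^{L}\|C(\alpha)\|_p = 2\epsilon$ directly; these are the same argument read in opposite directions.
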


\begin{proof}
Note that $\alpha <1/\|A\|_p\leq 1/\rho$. By Lemma~\ref{lem:errorbound}, it suffices to show that $\epsilon_\ell< 2\epsilon$ when $\ell>L$. We have

\begin{align*}
    \epsilon_{\ell} &= \left\|C(\alpha)\right\|_p\left(\alpha\|A\|_p\right)^\ell\\
    &< \left\|C(\alpha)\right\|_p\left(\alpha\|A\|_p\right)^{L}\\
    &= \left\|C(\alpha)\right\|_p\left(\alpha\|A\|_p\right)^{\log_{\alpha \|A\|_p}\left(\frac{2\epsilon}{\|C(\alpha)\|_p}\right)}\\
    &=\|C(\alpha)\|_p\frac{2\epsilon}{\|C(\alpha)\|_p}\\
    &= 2\epsilon.
\end{align*}
\end{proof}

Lemma~\ref{lem:error_guarantee} bounds the difference between $C(\alpha)$ and $C(\alpha,\ell)$ when $\ell$ is large enough. Theorem~\ref{thm:epsilonagree} shows that this also ensures that $C(\alpha)$ and $C(\alpha,\ell)$ $\epsilon$-agree.

\begin{theorem}\label{thm:epsilonagree}
Let $p \in \{ 1,2,\infty \}$, $\alpha\in (0,1/\|A\|_p)$, $\epsilon>0$ and $L$ be as in Lemma~\ref{lem:error_guarantee}. If $\ell > L$ then $C(\alpha)$ and $C(\alpha,\ell)$ $\epsilon$-agree.
\end{theorem}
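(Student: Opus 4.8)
The plan is to chain together the two error estimates we have already established with the purely order-theoretic criterion from Proposition~\ref{prop:epsilonagree}. The key observation is that $C(\alpha, \ell)$ is obtained from $C(\alpha)$ by discarding the tails of convergent series with nonnegative terms, so componentwise we do not merely control the size of the difference but also its sign.

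First I would record the componentwise inequality $0 \le C(\alpha)_i - C(\alpha,\ell)_i$ for every node $i$. This holds because
\begin{align*}
C(\alpha)_i - C(\alpha,\ell)_i = \sum_{k=\ell+1}^\infty \sum_{j=1}^n W_j \alpha^k \bigl(A^k\bigr)_{ij},
\end{align*}
and each summand is nonnegative: the weights $W_j$ are nonnegative by Definition~\ref{def:GAW}, the entries of the nonnegative matrix $A$ have nonnegative powers, and $\alpha > 0$. Next I would invoke Lemma~\ref{lem:error_guarantee} with the hypothesis $\ell > L$ to obtain $\|C(\alpha) - C(\alpha,\ell)\|_\infty < 2\epsilon$, which in particular gives $C(\alpha)_i - C(\alpha,\ell)_i \le 2\epsilon$ for every $i$.

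With both inequalities in hand, I would set $C := C(\alpha)$ and $C' := C(\alpha,\ell)$ and check that the hypothesis of Proposition~\ref{prop:epsilonagree2} is met, namely $0 \le C_v - C'_v \le 2\epsilon$ for all $v \in V(N)$. Proposition~\ref{prop:epsilonagree2} then yields immediately that $C(\alpha)$ and $C(\alpha,\ell)$ $\epsilon$-agree, which is exactly the claim. This is cleaner than trying to apply Proposition~\ref{prop:epsilonagree} directly, since that proposition would require $\|C - C'\|_\infty < \epsilon$, whereas Lemma~\ref{lem:error_guarantee} only delivers the weaker bound $2\epsilon$; the two-sided nonnegativity of the truncation error is precisely what lets us trade the factor of two for the one-sided structure that Proposition~\ref{prop:epsilonagree2} exploits.

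I expect the only genuine subtlety to be the nonnegativity step, since everything else is a mechanical substitution into a prior result. The argument relies on $A$ having nonnegative entries and $W$ being a nonnegative weight vector; both are guaranteed by the standing definitions (edge weights map into $\mathbb{R}^+$ and node weights are non-negative), so no additional hypotheses are needed. Once that sign condition is verified, the proof is essentially a one-line appeal to Proposition~\ref{prop:epsilonagree2}.
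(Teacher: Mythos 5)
Your proof follows essentially the same route as the paper's: apply Lemma~\ref{lem:error_guarantee} to obtain $\|C(\alpha)-C(\alpha,\ell)\|_\infty<2\epsilon$, then invoke Proposition~\ref{prop:epsilonagree2}. In fact you are slightly more careful than the paper, which cites Proposition~\ref{prop:epsilonagree2} without explicitly verifying its one-sided hypothesis $0\le C(\alpha)_v-C(\alpha,\ell)_v$; your observation that the truncation error is a tail sum of nonnegative terms (since $\alpha>0$, $A$ has nonnegative entries, and $W$ is nonnegative) supplies exactly that implicit step.
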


\begin{proof} By Lemma~\ref{lem:error_guarantee} we have that $\left\|C(\alpha)-C(\alpha,\ell)\right\|_\infty<2\epsilon$. Therefore, by Proposition~\ref{prop:epsilonagree2}, we have that $C(\alpha)$ and $C(\alpha,\ell)$ $\epsilon$-agree.
\end{proof}
We choose $\epsilon$ so that two nodes that have Katz scores within $\epsilon$ of each other can be considered equivalent in terms of ranking. Of course, such a value must be chosen relative to the Katz scores. In Corollary~\ref{cor:relativeerror}, we suggest letting $\epsilon$ be some fraction of $\|C(\alpha)\|_p$, and $\alpha$ a fraction of $1/\|A\|_p$ 
for $p\in \{1,2,\infty\}$. 

\begin{corollary}[Relative Error]\label{cor:relativeerror}
Let $p \in \{ 1,2,\infty \}$ and $\alpha_0,\epsilon_0\in (0,1)$. For $\alpha = \alpha_0/\|A\|_p$, and $\epsilon = \epsilon_0\|C(\alpha)\|_p$, if $\ell>\log_{\alpha_0}(\epsilon_0) = L$ then $C(\alpha)$ and $C(\alpha,\ell)$ $\epsilon$-agree.
\end{corollary}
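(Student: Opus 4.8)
The plan is to obtain this as a direct specialization of Theorem~\ref{thm:epsilonagree}, so that the whole argument reduces to a clean substitution followed by a single monotonicity observation. First I would verify that the hypotheses of Theorem~\ref{thm:epsilonagree} hold under the stated choices. Since $\alpha_0\in(0,1)$, we have $\alpha = \alpha_0/\|A\|_p \in (0,1/\|A\|_p)$, so the admissibility requirement on $\alpha$ is met; and since $\epsilon_0>0$ and $\|C(\alpha)\|_p>0$ (the Katz vector is nonzero), we have $\epsilon = \epsilon_0\|C(\alpha)\|_p>0$. Thus Theorem~\ref{thm:epsilonagree} applies with these particular $\alpha$ and $\epsilon$.

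Next I would compute the threshold $L$ furnished by Lemma~\ref{lem:error_guarantee} and Theorem~\ref{thm:epsilonagree} after inserting these values. The base simplifies as $\alpha\|A\|_p = \alpha_0$, and the argument of the logarithm simplifies as $\frac{2\epsilon}{\|C(\alpha)\|_p} = 2\epsilon_0$. Hence the theorem already guarantees that $C(\alpha)$ and $C(\alpha,\ell)$ $\epsilon$-agree whenever $\ell > \log_{\alpha_0}(2\epsilon_0)$.

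The only point that requires genuine care is reconciling this exact threshold $\log_{\alpha_0}(2\epsilon_0)$ with the value $\log_{\alpha_0}(\epsilon_0)$ appearing in the corollary. Here I would invoke that the base $\alpha_0$ lies in $(0,1)$, so $x\mapsto\log_{\alpha_0}(x)$ is \emph{decreasing}; since $\epsilon_0<2\epsilon_0$, this yields $\log_{\alpha_0}(\epsilon_0) > \log_{\alpha_0}(2\epsilon_0)$. Consequently the corollary's hypothesis $\ell>\log_{\alpha_0}(\epsilon_0)$ is a stronger (more restrictive) condition than the theorem's $\ell>\log_{\alpha_0}(2\epsilon_0)$, and therefore still forces $\epsilon$-agreement via Theorem~\ref{thm:epsilonagree}. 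Put differently, $\log_{\alpha_0}(\epsilon_0)$ serves as a slightly conservative but notationally cleaner upper bound for the exact threshold, and the factor of $2$ is absorbed precisely by this monotonicity step. I expect this factor-of-$2$ bookkeeping to be the subtlest part of the argument; everything else is mechanical substitution into the already-established theorem.
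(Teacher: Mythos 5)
Your proposal is correct and is essentially the paper's (implicit) argument: the corollary is stated without proof precisely because it is a direct substitution of $\alpha = \alpha_0/\|A\|_p$ and $\epsilon = \epsilon_0\|C(\alpha)\|_p$ into Theorem~\ref{thm:epsilonagree}. Your monotonicity step reconciling the theorem's threshold $\log_{\alpha_0}(2\epsilon_0)$ with the corollary's stated $\log_{\alpha_0}(\epsilon_0)$ is exactly the bookkeeping the paper leaves unspoken, and it is handled correctly.
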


We present a small example to illustrate the effect of $\alpha$ on node rankings. Consider the edge-weighted directed network $N$ in Figure \ref{fig:smallkatznetwork}. 
Let $\alpha_1$ be the positive solution to $C(b)=C(c)$, let $\alpha_2$ be the positive solution to $C(a)=C(c)$, and let $\alpha_3$ be the positive solution to $C(a)=C(b)$. When $\alpha$ is small, the walks of length 1 determine the ranking. As $\alpha$ increases, walks of length 2 and later walks of length 3 become more important. node $c$ has the most walks of length 1, node $b$ the most of length 2, and node $a$ the most of length 3, and they are each ranked on top for different ranges of $\alpha$. The change in rankings as function of $\alpha$ is sketched in Figure~\ref{fig:smallkatzexample}.

\begin{figure}[ht]
\begin{subfigure}[t]{.35\textwidth}\centering
      \includegraphics[scale = .65]{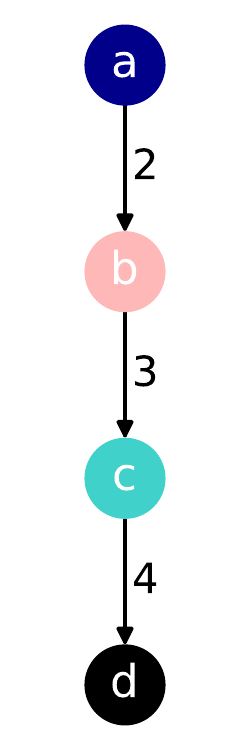}
    \caption{An example of a weighted directed network.}
    \label{fig:smallkatznetwork}
\end{subfigure}%
\hfill
\begin{subfigure}[t]{.6\textwidth}\centering
    \includegraphics[scale = .4]{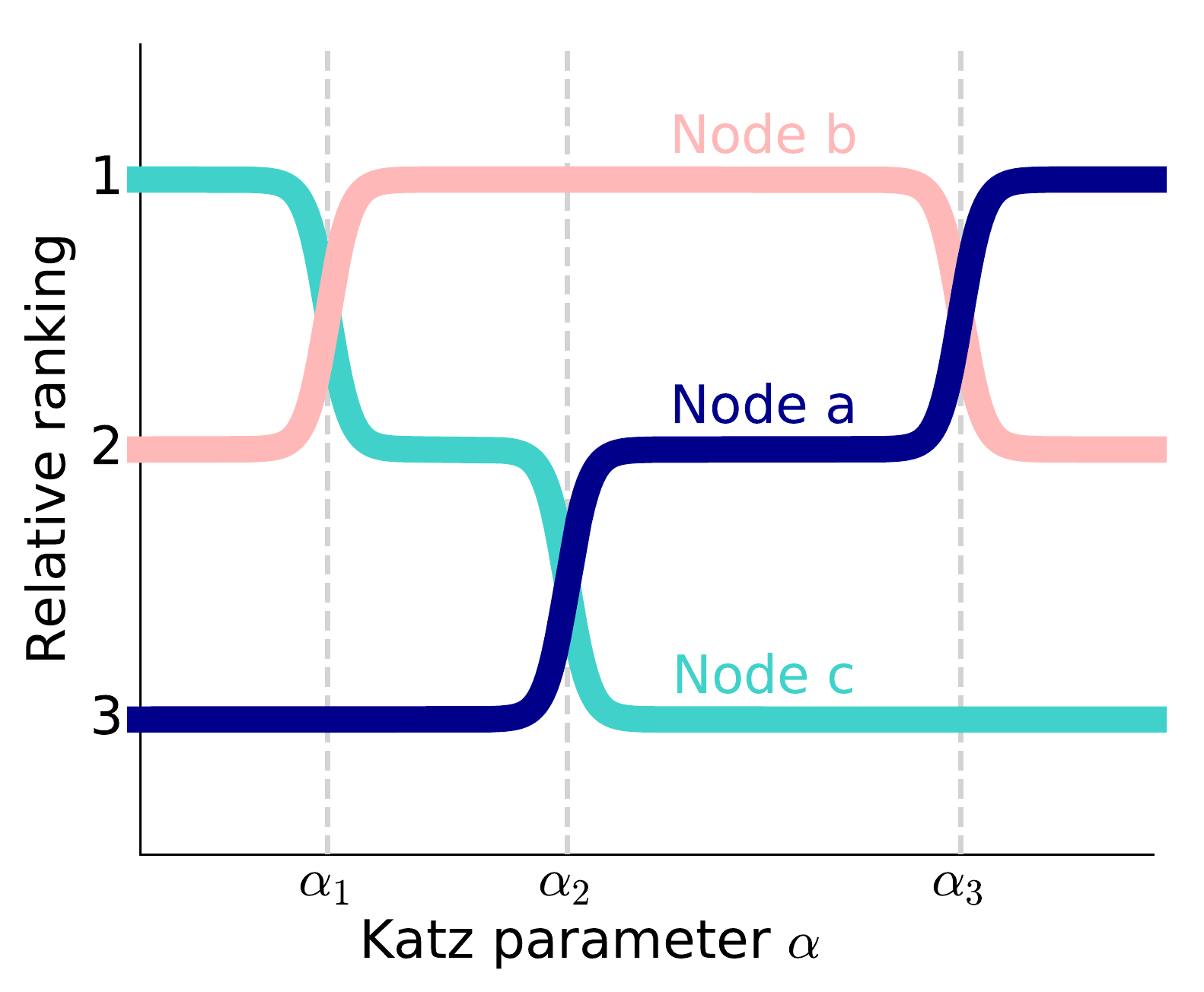}
    \caption{The node rankings from the Katz scores of the network in Figure \ref{fig:smallkatznetwork}.}
    \label{fig:smallkatzexample}
\end{subfigure}
\caption{The relationship between the Katz parameter $\alpha$ and the vertex ranking in a directed, edge-weighted example network.}
\end{figure}

\section{Application to the Susceptibility Inference Network}\label{sec:ApplicationToSIN}

We use $\alpha$-Katz centrality to rank medical conditions in the Susceptibility Inference Network (SIN) by their expected subsequent QTL. We create the distribution of difference between Katz scores and set $\epsilon$ so that $5$\% of the differences are less than it. This implies that 95\% of the pairs are $\epsilon$-properly ranked. We run the analysis for $\alpha=1$ and $\alpha=0.4$.  When $\alpha=1$, $(\alpha,\ell)$-Katz centrality and $\alpha$-Katz centrality $\epsilon$-agree when $\ell\geq 5$, and when $\alpha=0.4$, $(\alpha,\ell)$-Katz centrality and $\alpha$-Katz centrality $\epsilon$-agree when $\ell\geq 3$. Table~\ref{tab:SIN_ranking} shows a comparison of rankings of the top 10 conditions, and Figure \ref{fig:SIN_subgraph} illustrates the subnetwork of the SIN containing the 12 nodes that appear in Table~\ref{tab:SIN_ranking}. 

We note that one significant change in the ranking as $\alpha$ decreases from 1 to $0.4$ is the one that swaps the positions of nodes $I$ and $J$. Figure~\ref{fig:IJcontributions} plots the contributions of walks of length $\ell$ to the Katz scores of these two nodes. When $\alpha=1$, the contribution of walks of lengths 2 and 3 contribute enough to the score of node $I$ to place it above $J$. When $\alpha=0.4$ these longer walks contribute significantly less, lowering the ranking of node $I$ to fall below that of $J$. In this and other applications, there is no universal value of $\alpha$ that is best. Here, one would have to take into account the length of the space mission, as well as other factors that affect how realistic any number of progressions is. 

\begin{table}[ht]
    \centering
    \begin{tabular}{l|c|c}
      Rank & $\alpha = 1$ & $\alpha = 0.4$  \\
      \hline
      1 &  A &  A\\
      2 &  B &  B\\
      3 &  C &  D\\
      4 &  D &  C\\
      5 &  E &  E\\
      6 &  F &  F\\
      7 &  G &  G\\
      8 &  H &  J\\
      9 &  I &  H\\
      10 &  J &  L
    \end{tabular}
    \caption{The effect of the choice of $\alpha$ on the ranking of medical conditions in the SIN due to subsequent QTL.}
    \label{tab:SIN_ranking}
\end{table}

\begin{figure}[ht]
     \centering
     \begin{subfigure}[b]{0.45\textwidth}
         \centering
         \includegraphics[width=\textwidth]{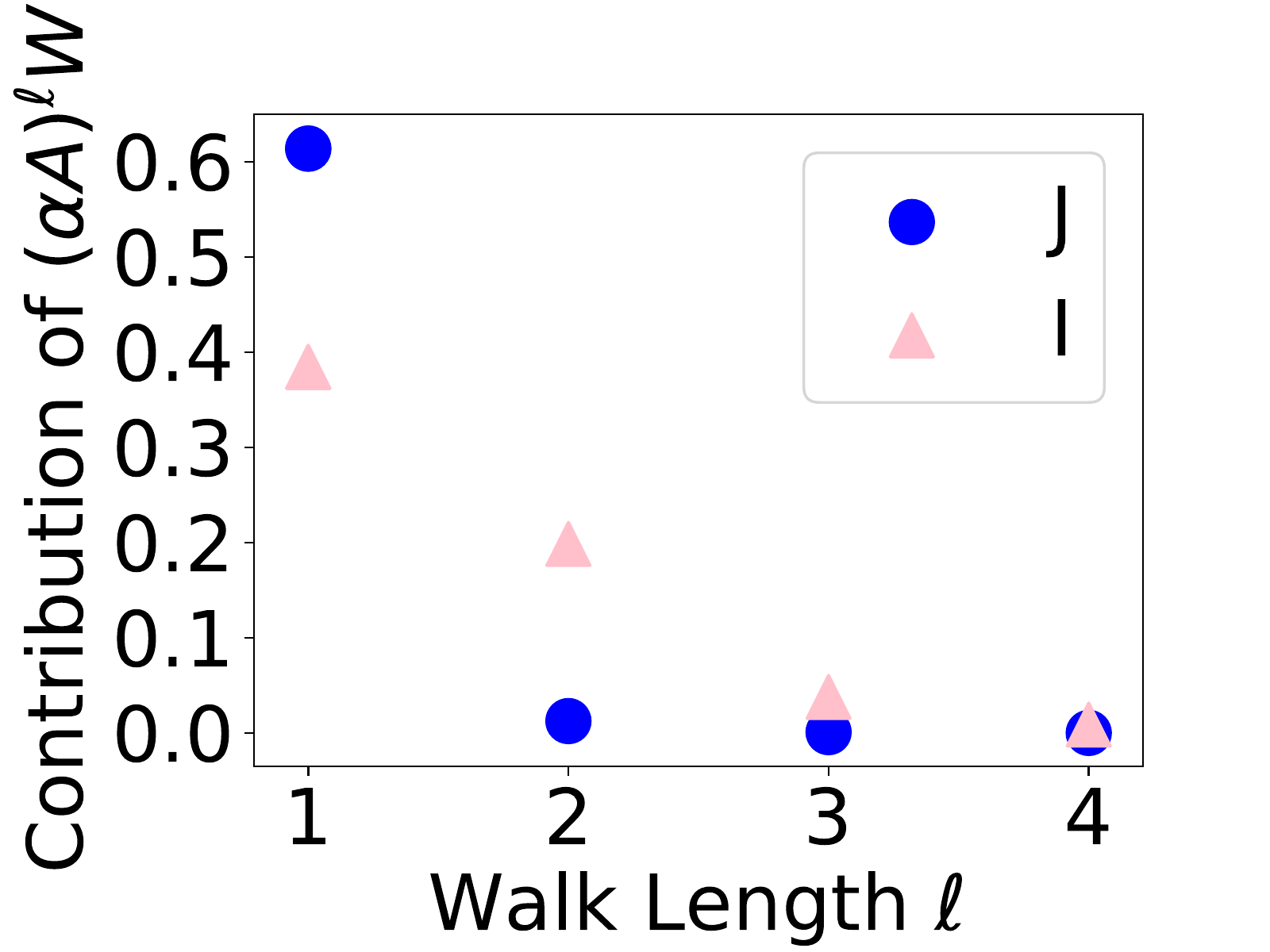}
         \caption{$\alpha = 1$}
         \label{fig:alpha1}
     \end{subfigure}
     \hfill
     \begin{subfigure}[b]{0.45\textwidth}
         \centering
         \includegraphics[width=\textwidth]{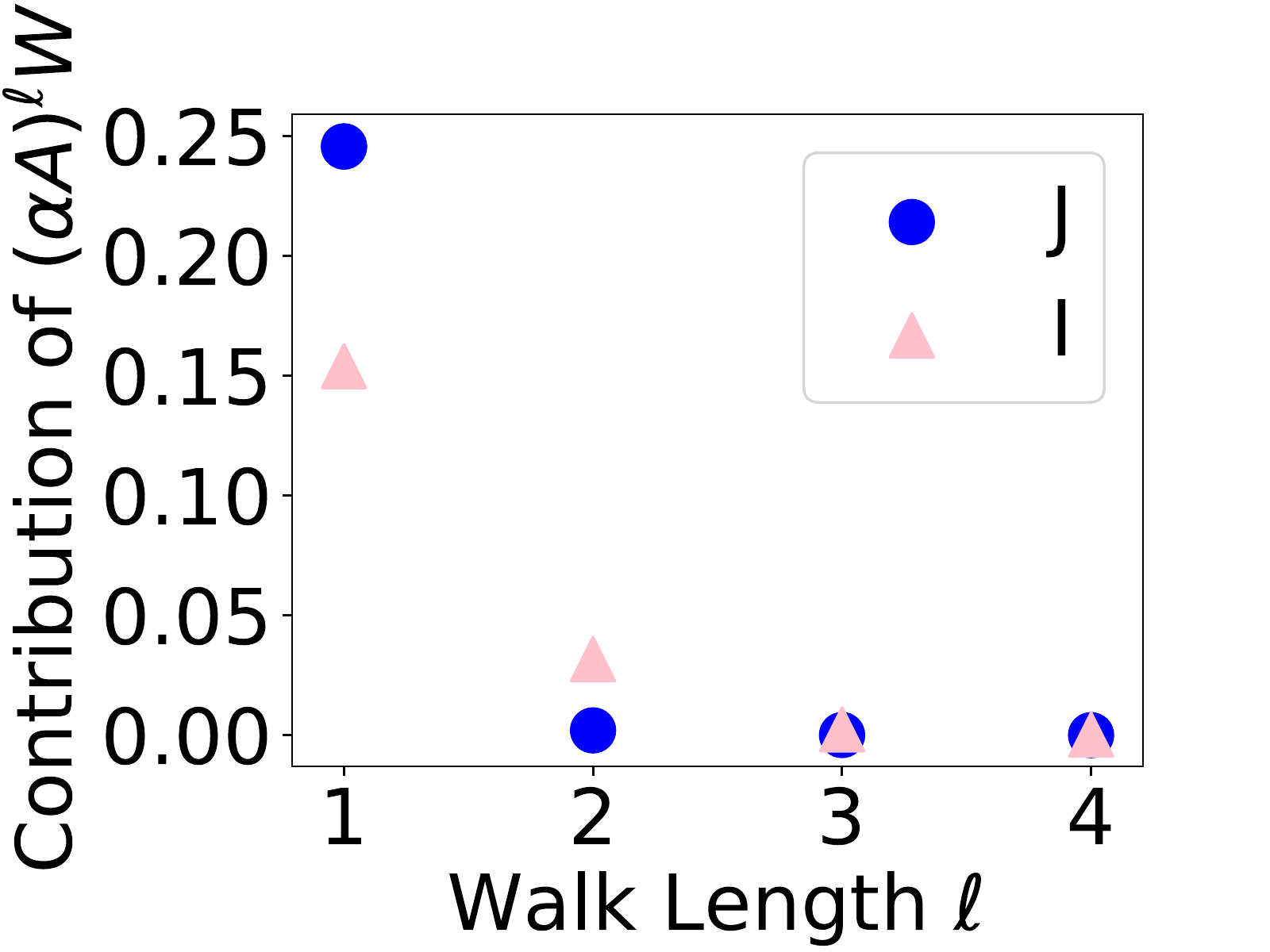}
         \caption{$\alpha = 0.4$}
         \label{fig:alpha4}
     \end{subfigure}
        \caption{Contribution of walks of length $\ell$ to the scores of nodes $I$ and $J$ in $1$-Katz centrality (a) and $0.4$-Katz centrality (b).}
        \label{fig:IJcontributions}
\end{figure}

\begin{figure}[ht]
    \begin{center}
        \begin{tikzpicture}
            \begin{scope}[every node/.style={circle,thick,draw}]
                \node (K) at (\y*0.86602540378,\y*0.5) {K};
                \node (I) at (\y*0.5,\y*0.86602540378) {I};
                \node (C) at (\y*0,\y*1) {C};
                \node (G) at (-\y*0.5,\y*0.86602540378) {G};
                \node (F) at (-\y*0.86602540378,\y*0.5) {F};
                \node (B) at (-\y*1,\y*0) {B};
                \node (A) at (\y*0.86602540378,-\y*0.5) {A};
                \node (J) at (\y*0.5,-\y*0.86602540378) {J};
                \node (E) at (\y*0,-\y*1) {E};
                \node (H) at (-\y*0.5,-\y*0.86602540378) {H};
                \node (D) at (-\y*0.86602540378,-\y*0.5) {D};
                \node (L) at (-\y*-1,-\y*0) {L};
            \end{scope}
            
            \begin{scope}[-latex]
                \path [->,-latex,line width=\x*0.22360679775mm] (A) edge node {} (J);
                \path [->,-latex,line width=\x*0.0316227766mm] (A) edge node {} (I);
                \path [->,-latex,line width=\x*0.22360679775mm] (A) edge node {} (L);
                \path [->,-latex,line width=\x*0.31622776601mm] (B) edge node {} (D);
                \path [->,-latex,line width=\x*0.31622776601mm] (B) edge node {} (I);
                \path [->,-latex,line width=\x*0.0316227766mm] (B) edge node {} (L);
                \path [->,-latex,line width=\x*0.31622776601mm] (C) edge node {} (G);
                \path [->,-latex,line width=\x*0.14142135623mm] (C) edge node {} (I);
                \path [->,-latex,line width=\x*0.0316227766mm] (D) edge node {} (L); 
                \path [->,-latex,line width=\x*0.0316227766mm] (E) edge node {} (L); 
                \path [->,-latex,line width=\x*0.0316227766mm] (G) edge node {} (F);
                \path [->,-latex,line width=\x*0.0316227766mm] (H) edge node {} (L); 
                \path [->,-latex,line width=\x*0.14142135623mm] (K) edge node {} (I); 
                \path [->,-latex,line width=\x*0.0316227766mm] (K) edge node {} (L); 
                \path [->,-latex,line width=\x*0.0316227766mm] (K) edge node {} (G); 
            \end{scope}
            \draw [-stealth, line width=\x*1.3953494186mm](A) -- (\z*0.86602540378,-\z*0.5);
            \draw [-stealth, line width=\x*0.91923881554
mm](B) -- (-\z*1,\z*0);
            \draw [-stealth, line width=\x*0.86602540378mm](C) -- (\z*0,\z*1);
            \draw [-stealth, line width=\x*0.67453687816mm](D) -- (-\z*0.86602540378,-\z*0.5);
            \draw [-stealth, line width=\x*0.62369864518mm](E) -- (\z*0,-\z*1);
            \draw [-stealth, line width=\x*0.7582875444mm](F) -- (-\z*0.86602540378,\z*0.5);
            \draw [-stealth, line width=\x*0.50990195135mm](G) -- (-\z*0.5,\z*0.86602540378);
            \draw [-stealth, line width=\x*0.65954529791mm](H) -- (-\z*0.5,-\z*0.86602540378);
            \draw [-stealth, line width=\x*1.08531101533mm](I) -- (\z*0.5,\z*0.86602540378);
            \draw [-stealth, line width=\x*0.38078865529mm](J) -- (\z*0.5,-\z*0.86602540378);
            \draw [-stealth, line width=\x*0.86429161745mm](K) -- (\z*0.86602540378,\z*0.5);
            \draw [-stealth, line width=\x*0.35496478698mm](L) -- (-\z*-1,-\z*0);
        \end{tikzpicture}
    \end{center}
    \caption{Subnetwork of the SIN with 12 most influential nodes with weighted edges. The thickness of the edges that point outwards illustrates the total out-going edge weight towards nodes in the remainder of the network.}
    \label{fig:SIN_subgraph}
\end{figure}
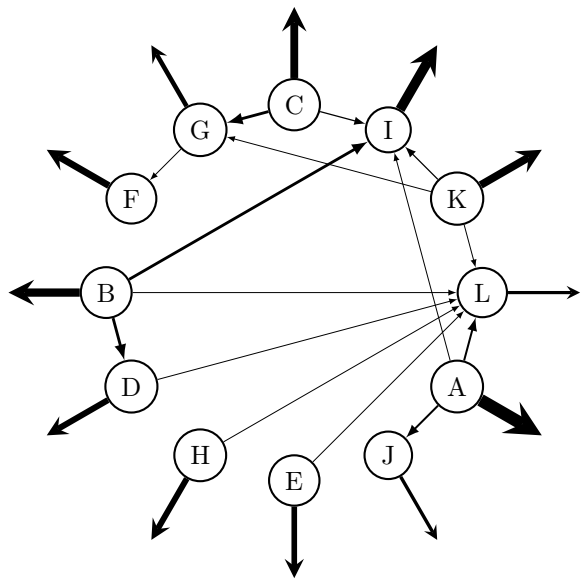
\FloatBarrier
\section{Testing the Upper Bound on Simulated Data}\label{sec:comparisons}

We would like to better understand for which values of $\ell$ that $C(\alpha)$ and $C(\alpha,\ell)$ $\epsilon$-agree. We define the \textit{true value} to be the smallest such $\ell$, and Theorem \ref{thm:epsilonagree} gives an upper bound on the true value. In Figure \ref{fig:compareupperandtrue}, we compare the upper bound to the true value on two families of undirected graphs. At each instance of $\alpha_0$ in Figures \ref{fig:compareupperandtruergg} and \ref{fig:compareupperandtruechung}, we sample 10 graphs from each specified graph family and plot the average upper bound and average true value across the samples. 

In Figure \ref{fig:compareupperandtruergg}, we sample from the Erd\H{o}s-R\'enyi model $G(n,p)$ where $n = 1000$ and $p = 0.008$. In Figure \ref{fig:compareupperandtruechung} we sample from the Chung-Lu model, which takes as input a list of node degrees. We individually sampled 1000 numbers from the negative binomial distribution to create this list. This is a distribution which takes a probability $p$ of success and a number $n$ of desired successes. We use $p = 0.1$ as the probability of success, and $n=3$ as the number of desired successes. 

The Chung-Lu model, as described here, produces graphs with a longer-tailed degree distribution than graphs sampled from the Erd\H{o}s-R\'enyi model. In Figure \ref{fig:compareupperandtrue}, the value for $\epsilon$ that we use is calculated using the same technique as in Section \ref{sec:ApplicationToSIN} for each iteration. That is, we create a list of pairwise differences between the Katz scores and set $\epsilon$ so that $5\%$ of the differences are less than it.

\begin{figure}[ht]
\centering
    \begin{subfigure}[b]{0.49\textwidth}
        \centering
        \includegraphics[scale = .36]{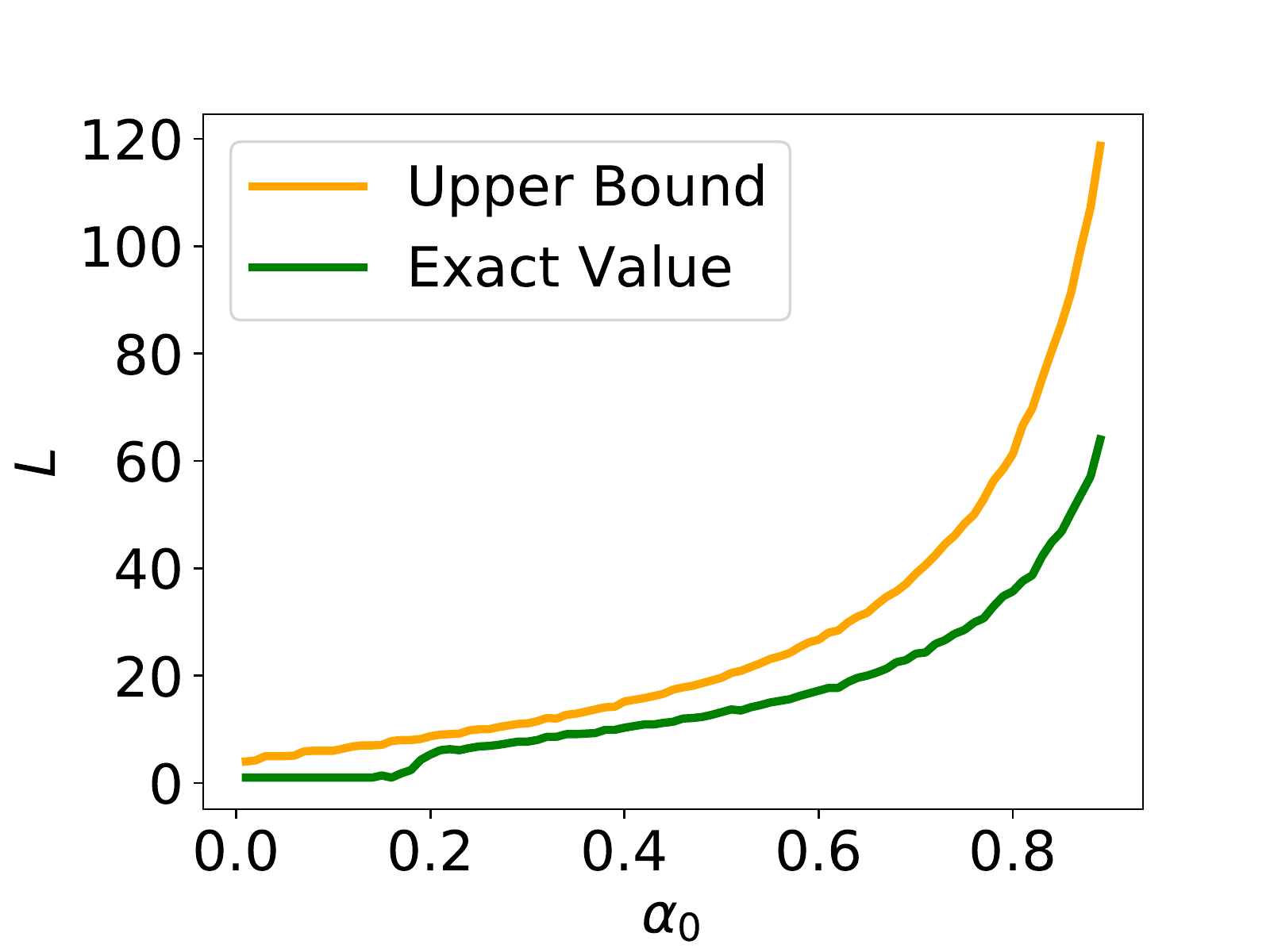}
        \caption{Erd\H{o}s-R\'enyi}
        \label{fig:compareupperandtruergg}
    \end{subfigure}
    \hfill
    \begin{subfigure}[b]{0.49\textwidth}
        \centering
        \includegraphics[scale = .36]{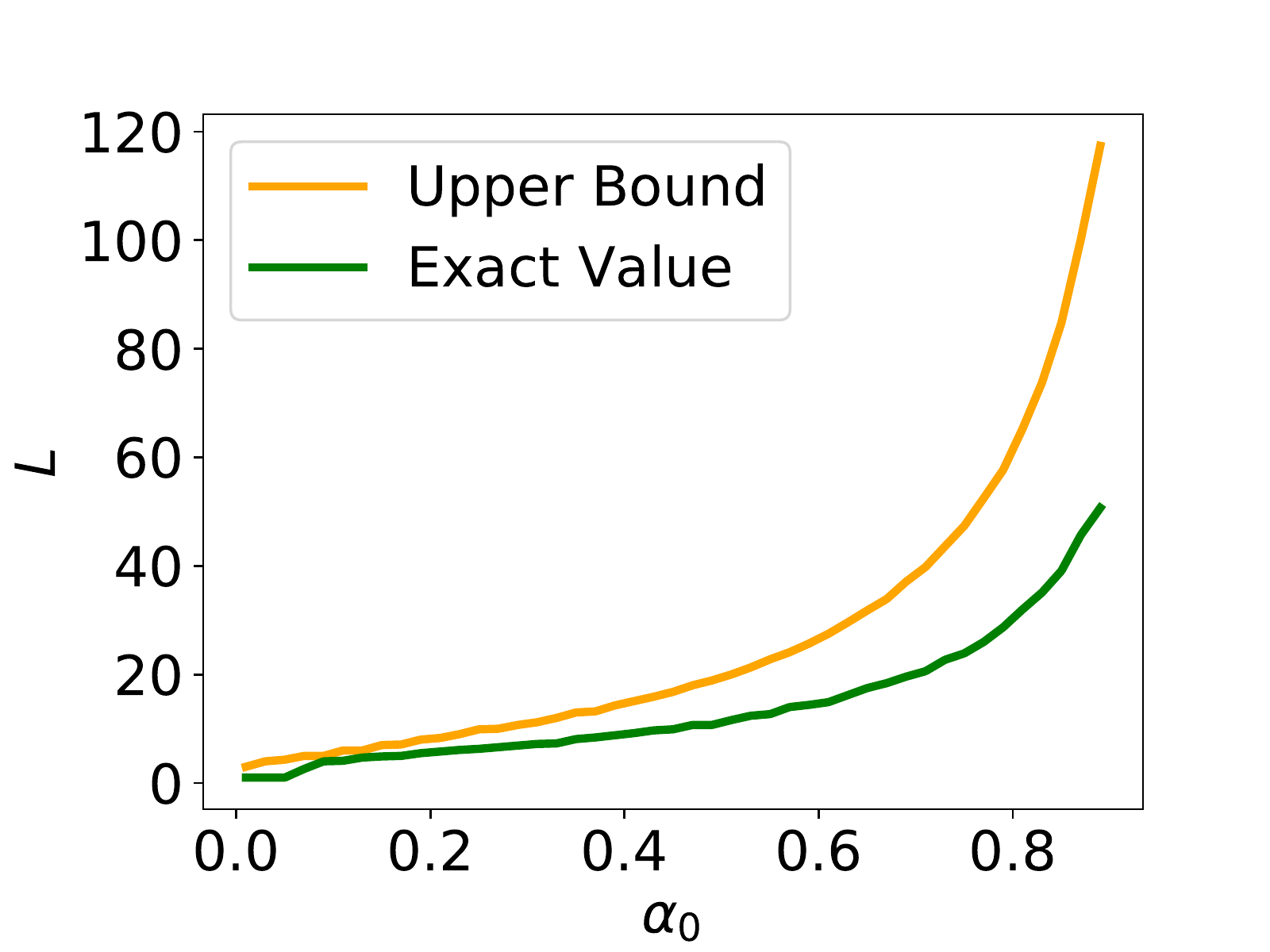}
        \caption{Chung-Lu}
        \label{fig:compareupperandtruechung}
    \end{subfigure}
    \caption{Comparing the upper bound to the true length using the Erd\H{o}s-R\'enyi model $G(1000,0.008)$ and the Chung-Lu model with degrees sampled from the negative binomial distribution.}
    \label{fig:compareupperandtrue}
\end{figure}

\section{Discussion and Conclusions}\label{sec:Discussion}
In this paper we introduced a tool to help compare different centrality measures. We applied this to $\alpha$- and $(\alpha,\ell)$-Katz centrality, to help better understand the effect of the $\alpha$-parameter on the walk lengths considered when it comes to ranking of nodes. For a given $\alpha$, we provide an upper bound on the walk length $L$ so that, when $\ell > L$, the Katz scores in $\alpha$- and $(\alpha,\ell)$-Katz centrality are within $2\epsilon$ of each other. We show that any two nodes with both centrality measures differing by at least $\epsilon$ are in the same order in both rankings when $\ell > L$. 

If we want to find a minimal value of $L$ such that $\alpha$-Katz and $(\alpha,\ell)$-Katz centrality $\epsilon$-agree for all $\ell\geq L$, then Theorem~\ref{thm:epsilonagree} provides an upper bound. We can find a stronger upper bound by iteratively increasing $L$ until for all $\ell\geq L$, $\|C(\alpha) - C(\alpha,\ell) \| < 2\epsilon$. This guarantees $\epsilon$-agreement, although $\epsilon$-agreement may still happen sooner, so care should be taken when interpreting these bounds.

All of the results in this paper require $\alpha$ to be in $(0,1/\|A\|_2)$, a subset of the possible values that can be used as a Katz parameter. It may be possible to extend these results to all possible Katz parameters, namely, values in $(0,1/\rho)$. These ranges match for undirected graphs, so this applies to directed, edge-weighted graphs.

We showed the effect of changing the $\alpha$ parameter in the Susceptibility Inference Network. In this case, changes in ranking were visible even among the top 10 nodes, and they may have major implications for decision making. It is therefore important that the choice of $\alpha$ is made carefully and tailored to each individual application. This also holds true for $\epsilon$.

\FloatBarrier
\section*{Acknowledgements}\label{sec:Acknowledgements}
The authors would like to acknowledge the Cross-Cutting Computational Modeling Project team at Glenn for their expertise, especially Dr. Drayton Munster for the helpful discussions and feedback throughout this work. The authors are grateful for NASA's Human Research Program and the University Space Research Association (USRA) for supporting this work.

\end{document}